\begin{document}
\newtheorem{thm}{Theorem}[section]
\newtheorem{lemma}[thm]{Lemma}
\newtheorem{prop}[thm]{Proposition}
\newtheorem{rem}[thm]{Remark}
\newtheorem{cor}[thm]{Corollary}

\title{The Markov Process Admits a Consistent Steady-State Thermodynamic Formalism}

\author{Liangrong Peng}
\author{Yi Zhu}%
\author{Liu Hong}
\homepage{Author to whom correspondence should be addressed. Electronic mail: zcamhl@tsinghua.edu.cn}
\affiliation{Zhou Pei-Yuan Center for Applied Mathematics, Tsinghua University,
Beijing, China, 100084}

\date{\today}

\begin{abstract}
The seek for a new universal formulation for describing various non-equilibrium processes is a central task of modern non-equilibrium thermodynamics. In this paper, a novel steady-state thermodynamic formalism was established for general Markov processes described by the Chapman-Kolmogorov equation. Furthermore, corresponding formalisms of steady-state thermodynamics for master equation and Fokker-Planck equation could be rigorously derived in mathematics.  To be concrete, we proved that: 1) in the limit of continuous time, the steady-state thermodynamic formalism for the Chapman-Kolmogorov equation fully agrees with that for the master equation; 2) a similar one-to-one correspondence could be established rigorously between the master equation and Fokker-Planck equation in the limit of large system size; 3) when a Markov process is restrained to one-step jump, the steady-state thermodynamic formalism for the Fokker-Planck equation with discrete state variables also goes to that for master equations, as the discretization step gets smaller and smaller. Our analysis indicated that, with respect to the steady state, general Markov processes admit a unified and self-consistent non-equilibrium thermodynamic formulation, regardless of underlying detailed models.
\end{abstract}

\keywords{Non-equilibrium, Steady-state thermodynamics, Markov process, Master equation, Fokker-Planck equation}
\maketitle

\section{Introduction}
How to extend the concepts and methodology of equilibrium thermodynamics to general non-equilibrium cases is a big question. \cite{de2013,jou1999} After more than 80 years of hard working since Onsager published his celebrated reciprocal relation in 1931, \cite{onsager1931_1,onsager1931_2} which is generally recognized as the beginning of modern non-equilibrium thermodynamics, there is no widely accepted unified theory for describing various non-equilibrium processes. An important forward step was done by Oono and Paniconi in 1998. \cite{oono1998} By introducing several new concepts as ``excess heat'' and ``housekeeping dissipation'' for characterizing the energy exchange between a given system and its surroundings, they proposed an phenomenological framework to extend the equilibrium thermodynamics to the non-equilibrium steady state (NESS). The NESS is a direct generalization of thermodynamic equilibrium state, both of which are time irrelevant. However, the NESS allows mass and energy transfer within a system or between systems, just provided the system is still in a dynamical balance. The NESS is usually correlated with key words like open system, heat and mass exchange, circular flows, break down of detailed balance, net entropy production rate etc. \cite{qian2006}

Employing the basic framework of Oono and Paniconi, Sasa and Tasaki \cite{sasa2006} subsequently attempted to search for a universal thermodynamic formalism in non-equilibrium physics, which is expected to apply to a large class of non-equilibrium steady states including a heat conducting fluid, a sheared fluid, and an electrically conducting fluid. Later, the formalism of steady-state thermodynamics has been developed into great details in several concrete classical examples. As an example, for the Langevin dynamics, Sekimoto, \cite{mizutani1997,sekimoto1998} Hatano and Sasa \cite{hatano2001} found that the extended form of the second law holds for transitions between steady states and the Shannon entropy difference is related to the excess heat produced in an infinitely slow operation. As to the master equation and Fokker-Planck equation, Esposito and Van den Broeck \cite{esposito2010,van2010} showed that with respect to the steady state, the entropy production rate could be decomposed into a sum of two non-negative terms, namely the adiabatic and non-adiabatic parts, which reflect the irreversibility of the system under equilibrium and steady states respectively. Based on their formulation, the second law of thermodynamics could be casted into three different strengthened versions. Alternative strengthened versions of second law of thermodynamics have been reported by Hong \textit{et al.} for the master equation too, \cite{hong2016} and then been proved valid for general irreversible processes as an inference of famous KL divergence. \cite{huang2016} Recently, by using the large-deviation theory, Ge and Qian \cite{ge2016} proved that a nonlinear chemical reaction system has a consistent steady state thermodynamic formulism in both macroscopic and mesoscopic scales, including the free energy, entropy production rate and its decomposition.

Although it appears that many classical physical systems could be casted into a unified framework of steady-state thermodynamics proposed by Oono and Paniconi, \cite{oono1998} it is still questionable how closely are those formalisms related to each other. Especially given the deep mathematical and physical correlations among several well-known models, \textit{e.g.}, according to It$\hat{o}$ calculus the Fokker-Planck equation governs the probability distribution evolution of a corresponding Langevin dynamics in time, it would be natural to expect the constructed steady-state thermodynamic formalisms on those models will preserve such a kind of correspondence. Only in this way, we could expect the existence, uniqueness, universality and operability of a unified steady-state thermodynamics for general non-equilibrium processes, rather than studies case by case.

Motivated by recent developments in the steady-state thermodynamics, particularly the works of Esposito and Van den Broeck on the master equation and Fokker-Planck equation, \cite{esposito2010,van2010} we are trying to show there is indeed a universal framework of steady-state thermodynamic descriptions, at least for various Markov processes in the discrete or continuous space. To be concrete, the steady-state thermodynamic formalism here we mean not only includes those classical thermodynamic elements like the internal energy, Helmholtz free energy, Boltzmann entropy, entropy flow and entropy production rate, but also contains the excess heat, decomposition of entropy production
rate into adiabatic and non-adiabatic parts as well as three different strengthened versions of the second law of thermodynamics as defined in Eqs. \eqref{ckentropy}-\eqref{face3} for the Chapman-Kolmogorov equation.

More importantly, with respect to these quantities and relations for steady-state thermodynamics, a one-to-one correspondence during the coarse graining procedure from the Chapman-Kolmogorov equation to the master equation and then to the Fokker-Planck equation could be rigorously established in mathematics. We show that: 1) in the limit of continuous time, the steady-state thermodynamic
formalism for the Chapman-Kolmogorov equation fully agrees with that for
the master equation; 2) a similar one-to-one correspondence could be established rigorously
between the master equation and Fokker-Planck equation in the limit of large
system size; 3) when a Markov process is restrained to one-step jump, the steady-state
thermodynamic formalism for the Fokker-Planck equation with discrete state
variables also goes to the formulation for master equations, as the discretization step gets smaller and smaller. By studying these concrete models of Markov processes, the universality of steady-state thermodynamics could thus be verified.

\section{Steady-state thermodynamic formalism for C-K equations}

As the most important stochastic process, the Markov process has been extensively applied to laser physics, chemical reactions, molecular biology and many other fields. \cite{reichl1980}
The Markov property indicates that, the future probabilities of a Markov process could be solely determined by the present state, hence independent of its whole history. Mathematically, a Markov process is characterized by the well-known Chapman-Kolmogorov equation (or C-K equation for short),
\begin{equation}\label{ckequation}
P(x,t+\Delta t|x_0,t_0)=\int dy P(x,t+\Delta t|y,t)P(y,t|x_0,t_0),\quad  t_0<t<t+\Delta t,
\end{equation}
where $P(x,t+\Delta t|x_0,t_0)$ is the transition probability density at $t+\Delta t$ given the initial position $x_0$ at time $t_0$.
In what follows, the dependence of variables on initial conditions will be dropped for notational simplicity, $i.e.$, $P(x,t)\equiv P(x,t|x_0,t_0)$. Note the integral above can be directly replaced by a summation in the discrete case, without affecting all following results.

After sufficiently long time, a Markov process would be expected to reach a steady state characterized by a time-independent probability density function
\begin{equation*}
P^s(x,t)=P^s(x,t')=P^s(x), \quad \forall t, t'>t_0, \forall x.
\end{equation*}
Substituting it into Eq. \eqref{ckequation}, one has
\begin{equation}\label{cksteadystate}
\int dy P(x,t+\Delta t|y,t)P^s(y)=P^s(x,t+\Delta t)=P^s(x)=\int dy P(y,t+\Delta t|x,t)P^s(x),
\end{equation}
for every $x$ and every $t, \Delta t >0$. In the last equality, the normalization property $\int dy P(y,t+\Delta t|x,t)=1$ is used.
If there is a $P^e(x)>0$ further satisfies
\begin{equation}\label{ckdetailbalance}
P(x,t+\Delta t|y,t)P^e(y)= P(y,t+\Delta t|x,t)P^e(x),\quad  \forall~t, \Delta t>0, \forall~ x,y,
\end{equation}
Eq. \eqref{ckequation} will be called under the condition of detailed balance, which apparently is a special case of the steady state condition.

To construct the steady-state thermodynamics for the C-K equation, we define the entropy function $S^{C}(t)$ as
\begin{equation}\label{ckentropy}
S^{C}(t)=-\int dx P(x,t) \ln P(x,t),
\end{equation}
where the Boltzmann constant $k_B$ is set to be 1. With the entropy function in hand, it is straightforward to calculate the entropy difference between two successive times,
\begin{align*}
{S^{C}(t+\Delta t)-S^{C}(t)}
&=-\int dx [P(x,t+\Delta t) \ln P(x,t+\Delta t) - P(x,t) \ln P(x,t)]\\
&\equiv \Delta S^C(t)+ \Delta I^C(t),
\end{align*}
where $\Delta S^C(t)=- \int dx [P(x,t+\Delta t)-P(x,t)]\ln P(x,t)$ denotes the usual entropy change, and $\Delta I^C(t)= -\int dx P(x,t+\Delta t) \ln \frac{P(x,t+\Delta t)}{P(x,t)}\leq 0$ denotes the information gain per $\Delta t$ by virtue of the prior probability $P(x,t)$.

The following proposition gives an explicit expression of the entropy change for the Markov process described by the C-K equation.

\begin{prop}
For the C-K equation \eqref{ckequation}, the entropy change per $\Delta t$ is given by
\begin{equation}\label{ckentropychange}
{\Delta S^{C}}(t)=\frac{1}{2} \iint dxdy J^C(x,y,t,\Delta t) \ln \frac{P(y,t)}{P(x,t)},
\end{equation}
where $J^C(x,y,t,\Delta t)=P(x,t+\Delta t|y,t)P(y,t) -P(y,t+\Delta t|x,t)P(x,t)$ is the thermodynamic flux of the C-K equation.
\end{prop}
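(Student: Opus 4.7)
The plan is to start from the definition of $\Delta S^C(t)$ given in the text, rewrite $P(x,t+\Delta t)-P(x,t)$ as a double integral of $J^C$ by combining the C-K equation with the normalization identity, and then exploit the manifest antisymmetry of $J^C$ in its first two arguments to produce the symmetric form in the proposition.

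First I would substitute the Chapman-Kolmogorov equation \eqref{ckequation} for $P(x,t+\Delta t)$ and the normalization $\int dy\,P(y,t+\Delta t|x,t)=1$ for $P(x,t)$. This yields
\begin{equation*}
P(x,t+\Delta t)-P(x,t)=\int dy\,\bigl[P(x,t+\Delta t|y,t)P(y,t)-P(y,t+\Delta t|x,t)P(x,t)\bigr]=\int dy\, J^C(x,y,t,\Delta t).
\end{equation*}
Plugging this into the defining expression $\Delta S^C(t)=-\int dx\,[P(x,t+\Delta t)-P(x,t)]\ln P(x,t)$ gives the preliminary identity
\begin{equation*}
\Delta S^C(t)=-\iint dx\,dy\, J^C(x,y,t,\Delta t)\ln P(x,t).
\end{equation*}

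Next I would use the fact that, directly from its definition, $J^C$ is antisymmetric under $x\leftrightarrow y$, i.e.\ $J^C(x,y,t,\Delta t)=-J^C(y,x,t,\Delta t)$. Renaming dummy variables in the above integral therefore gives
\begin{equation*}
\iint dx\,dy\, J^C(x,y,t,\Delta t)\ln P(x,t)=-\iint dx\,dy\, J^C(x,y,t,\Delta t)\ln P(y,t).
\end{equation*}
Averaging the two equivalent representations of the same integral produces the symmetrized form $\tfrac12\iint dx\,dy\,J^C(x,y,t,\Delta t)\ln[P(x,t)/P(y,t)]$, and substituting back yields exactly \eqref{ckentropychange}.

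The main obstacle is essentially bookkeeping: one must be careful that the two double integrals converge separately (or, equivalently, that Fubini applies) so that the swap of dummy variables and the linear combination are legitimate. Under the standing assumption that $P(x,t)>0$ wherever transitions occur and that $S^C(t)$ is finite, this is immediate, so no further analytic input is required beyond the C-K equation, normalization, and the antisymmetry of $J^C$.
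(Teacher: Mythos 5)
Your proposal is correct and follows essentially the same route as the paper: both arguments rest on rewriting $P(x,t+\Delta t)-P(x,t)$ as a two-variable integral via the C-K equation together with normalization (the paper does this with a $\delta(y-x)$ insertion, you do it by writing $P(x,t)=\int dy\,P(y,t+\Delta t|x,t)P(x,t)$, which is the same identity), followed by symmetrization over the dummy variables $x\leftrightarrow y$. Your version is, if anything, slightly cleaner because the antisymmetry of $J^C$ makes the $\tfrac12$-averaging step immediate.
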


\begin{proof}
By substituting the C-K equation \eqref{ckequation} into the entropy change $\Delta S^C(t)$, one obtains
\begin{align*}
{\Delta S^{C}}(t)
&=- \int dx [P(x,t+\Delta t)-P(x,t)]\ln P(x,t)\\
&=- \iint dxdy [P(x,t+\Delta t|y,t)-\delta(y-x)]P(y,t)\ln P(x,t)\\
&=- \iint dxdy [P(x,t+\Delta t|y,t)-\delta(y-x)]P(y,t)\ln \frac{P(x,t)}{P(y,t)}\\
&=  \frac{1}{2}\iint dxdy [P(x,t+\Delta t|y,t)P(y,t) -P(y,t+\Delta t|x,t)P(x,t)]\ln \frac{P(y,t)}{P(x,t)},
\end{align*}
where the symmetry of the first term with respect to dummy variables $x$ and $y$ is used in the last equality.

This gives the desired result.
\end{proof}

Based on the general framework for the steady-state thermodynamics proposed by Oono and Paniconi,\cite{oono1998} the entropy change ${\Delta S^{C}}(t)$ could be separated into two parts: the entropy production ${\Delta S^{C}_i}(t)$ and entropy flow ${\Delta S^{C}_e}(t)$. The entropy production is always non-negative as a manifestation of the second law of thermodynamics. More interestingly, it can be further decomposed into a sum of two non-negative parts: the adiabatic and non-adiabatic entropy productions according to their different origins, which are stated through the following proposition.
\begin{prop} \label{prop1}
The entropy production ${\Delta S^{C}_i}(t)$ and entropy flow ${\Delta S^{C}_e}(t)$ per $\Delta t$ for the C-K equation are given as
\begin{align}
{\Delta S^{C}_i}(t)
&=\frac{1}{2} \iint dxdy J^C(x,y,t,\Delta t)\ln \frac{P(x,t+\Delta t|y,t)P(y,t)}{P(y,t+\Delta t|x,t)P(x,t)}\geq 0,\\
{\Delta S^{C}_e}(t)
&=\frac{1}{2} \iint dxdy J^C(x,y,t,\Delta t)\ln \frac{P(y,t+\Delta t|x,t)}{P(x,t+\Delta t|y,t)}.
\end{align}
Furthermore, the entropy production ${\Delta S^{C}_i}(t)$ can be decomposed into an adiabatic and a non-adiabatic parts as
\begin{align}
{\Delta S^{C}_{ad}}(t)
&=\frac{1}{2} \iint dxdy J^C(x,y,t,\Delta t)\ln \frac{P(x,t+\Delta t|y,t)P^s(y)}{P(y,t+\Delta t|x,t)P^s(x)}\geq0,\\
{\Delta S^{C}_{na}}(t)
&=\frac{1}{2} \iint dxdy J^C(x,y,t,\Delta t)\ln \frac{P^s(x)P(y,t)}{P^s(y)P(x,t)}\geq 0.
\end{align}
Here $P^s(x)$ denotes the probability density in the steady state.
\end{prop}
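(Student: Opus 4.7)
The plan is to start from the entropy change identity in Proposition~2.1 and peel off the logarithm in layers. Recall
\[
{\Delta S^{C}}(t)=\frac{1}{2}\iint dxdy\, J^C(x,y,t,\Delta t)\ln\frac{P(y,t)}{P(x,t)} .
\]
First I would split the logarithm by writing
\[
\ln\frac{P(y,t)}{P(x,t)}
=\ln\frac{P(x,t+\Delta t|y,t)P(y,t)}{P(y,t+\Delta t|x,t)P(x,t)}
+\ln\frac{P(y,t+\Delta t|x,t)}{P(x,t+\Delta t|y,t)} ,
\]
which immediately yields $\Delta S^C_i(t)$ and $\Delta S^C_e(t)$ as claimed and confirms $\Delta S^C=\Delta S^C_i+\Delta S^C_e$.

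For non-negativity of $\Delta S^C_i(t)$, observe that the integrand has the pointwise form $(a-b)\ln(a/b)$ with $a=P(x,t+\Delta t|y,t)P(y,t)$ and $b=P(y,t+\Delta t|x,t)P(x,t)$, so the elementary inequality $(a-b)\ln(a/b)\ge 0$ makes the integrand non-negative before integration. For the further decomposition I would simply insert $P^s(y)/P^s(x)\cdot P^s(x)/P^s(y)=1$ inside the log, writing
\[
\ln\frac{P(x,t+\Delta t|y,t)P(y,t)}{P(y,t+\Delta t|x,t)P(x,t)}
=\ln\frac{P(x,t+\Delta t|y,t)P^s(y)}{P(y,t+\Delta t|x,t)P^s(x)}
+\ln\frac{P^s(x)P(y,t)}{P^s(y)P(x,t)} ,
\]
which yields $\Delta S^C_i=\Delta S^C_{ad}+\Delta S^C_{na}$ with the integrands stated.

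The main obstacle is the non-negativity of $\Delta S^C_{ad}$ and $\Delta S^C_{na}$ individually, since here the integrand is no longer pointwise of the form $(a-b)\ln(a/b)$. The trick I would use is twofold. First, both logarithm factors are antisymmetric under $x\leftrightarrow y$, and $J^C$ is antisymmetric too, so the $\tfrac12\iint J^C\, L\,dxdy$ integral can be rewritten as a single positive-weight integral
\[
\iint dxdy\, P(x,t+\Delta t|y,t)P(y,t)\,L(x,y)
\]
by renaming variables in the $-B\cdot L$ piece. Second, I would identify this with a Kullback--Leibler divergence by exhibiting a second joint probability density against which $\mu(x,y):=P(x,t+\Delta t|y,t)P(y,t)$ is being compared. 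For $\Delta S^C_{ad}$ the candidate is $\nu(x,y):=P(y,t+\Delta t|x,t)P^s(x)P(y,t)/P^s(y)$, and its total mass evaluates to $1$ precisely by the stationarity identity $\int dx\, P(y,t+\Delta t|x,t)P^s(x)=P^s(y)$ from Eq.~\eqref{cksteadystate}; for $\Delta S^C_{na}$ the analogous candidate is $\nu'(x,y):=P(x,t+\Delta t|y,t)P^s(y)P(x,t)/P^s(x)$, whose normalization uses the other form of stationarity $\int dy\, P(x,t+\Delta t|y,t)P^s(y)=P^s(x)$. Once $\nu,\nu'$ are bona fide probability densities, the two integrals become $D_{\mathrm{KL}}(\mu\|\nu)\ge0$ and $D_{\mathrm{KL}}(\mu\|\nu')\ge0$ respectively, completing the proof.
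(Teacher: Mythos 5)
Your proposal is correct and is essentially the paper's own argument: the paper likewise rewrites $\tfrac12\iint J^C L$ as the single positive-weight integral $\iint P(x,t+\Delta t|y,t)P(y,t)\,L$ and then applies $-\ln\xi\ge 1-\xi$ together with the steady-state condition \eqref{cksteadystate} to get the lower bound $0$, which is exactly the content of your $D_{\mathrm{KL}}(\mu\|\nu)\ge 0$ with $\nu$ normalized by stationarity. The KL-divergence packaging is a cosmetic difference only; your identification of $\nu$ and $\nu'$ and their normalizations reproduces the paper's chain of equalities verbatim.
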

\begin{proof}
It is sufficient to prove the non-negativity of ${\Delta S^{C}_{ad}}(t)$ for $t, \Delta t>0$, and ${\Delta S^{C}_{na}}(t)\geq 0$ could be obtained in a similar way.
By rewriting the adiabatic entropy production in a compact form and using the inequality
$-\ln \xi \geq 1-\xi$ for $\xi>0$,
we have
\begin{align*}
{\Delta S^{C}_{ad}}(t)
&=\iint dxdy {P(x,t+\Delta t|y,t)P(y,t)} [-\ln \frac{P(y,t+\Delta t|x,t)P^s(x)}{P(x,t+\Delta t|y,t)P^s(y)}]\\
&\geq \iint dxdy {P(x,t+\Delta t|y,t)P(y,t)} [1 - \frac{P(y,t+\Delta t|x,t)P^s(x)}{P(x,t+\Delta t|y,t)P^s(y)}]\\
&=\iint dxdy {P(x,t+\Delta t|y,t)P(y,t)} - \int dy \frac{P(y,t)}{P^s(y)}\int dx P(y,t+\Delta t|x,t)P^s(x)\\
&=\iint dxdy {P(x,t+\Delta t|y,t)P(y,t)} - \int dy \frac{P(y,t)}{P^s(y)}\int dx P(x,t+\Delta t|y,t)P^s(y)
= 0,
\end{align*}
where the steady state condition \eqref{cksteadystate} is used in the last equation.
\end{proof}

\begin{rem}
The non-negativity of the entropy production ${\Delta S^{C}_i}(t)$ is guaranteed by the second law of thermodynamics.
It becomes zero if and only if the condition of detailed balance Eq. \eqref{ckdetailbalance} holds. ${\Delta S^{C}_e}(t)$ represents the entropy exchange between the system and its surrounding environment and does not have a definite sign.
The adiabatic entropy production, also known as the housekeeping heat, becomes zero if and only if the condition of detailed balance holds; while the non-adiabatic part vanishes as long as the steady state is attained.
\end{rem}

Further introduce the excess entropy change as
\begin{equation}
{\Delta S^{C}_{ex}}(t)
=\frac{1}{2} \iint dxdy J^C(x,y,t,\Delta t)\ln \frac{P^s(y)}{P^s(x)},
\end{equation}
then following results are derived as a direct corollary of above propositions.
\begin{cor}
For any $t, \Delta t>0$, the C-K equation has following relations,
\begin{eqnarray}
&{\Delta S^{C}}(t)-{\Delta S^{C}_e}(t)={\Delta S^{C}_i}(t)\geq0, \label{face1}\\
&{\Delta S^{C}}(t)-{\Delta S^{C}_{ex}}(t)={\Delta S^{C}_{na}}(t)\geq0, \label{face2}\\
&{\Delta S^{C}_{ex}}(t)-{\Delta S^{C}_{e}}(t)={\Delta S^{C}_{ad}}(t)\geq0 \label{face3}.
\end{eqnarray}
\end{cor}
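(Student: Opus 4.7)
The plan is to observe that every quantity in sight has the same outer template: a factor of $\tfrac12 \iint dx\,dy\, J^C(x,y,t,\Delta t)$ multiplying a logarithm of some ratio of transition densities and probabilities. Since the prefactor is shared, each of the three claimed identities reduces to a pointwise identity among logarithms of ratios, and $\ln(AB)=\ln A + \ln B$ does all of the work. So my strategy is: first, collect the five relevant integrand logarithms from the definitions of $\Delta S^C$, $\Delta S^C_e$, $\Delta S^C_i$, $\Delta S^C_{ex}$, $\Delta S^C_{ad}$, $\Delta S^C_{na}$; second, verify the three algebraic identities between them; and third, invoke Proposition \ref{prop1} for the sign statements.

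Concretely, for \eqref{face1} I would compute
\[
\ln\frac{P(x,t+\Delta t|y,t)P(y,t)}{P(y,t+\Delta t|x,t)P(x,t)} \;+\; \ln\frac{P(y,t+\Delta t|x,t)}{P(x,t+\Delta t|y,t)} \;=\; \ln\frac{P(y,t)}{P(x,t)},
\]
and then integrate both sides against $\tfrac12 J^C(x,y,t,\Delta t)$ to obtain $\Delta S^C_i(t)+\Delta S^C_e(t)=\Delta S^C(t)$, i.e.\ \eqref{face1}. The non-negativity $\Delta S^C_i(t)\geq 0$ is Proposition \ref{prop1}.

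For \eqref{face2}, the corresponding pointwise cancellation is
\[
\ln\frac{P^s(x)P(y,t)}{P^s(y)P(x,t)} \;+\; \ln\frac{P^s(y)}{P^s(x)} \;=\; \ln\frac{P(y,t)}{P(x,t)},
\]
which under the common flux integral delivers $\Delta S^C_{na}(t)+\Delta S^C_{ex}(t)=\Delta S^C(t)$. For \eqref{face3} I would use
\[
\ln\frac{P(x,t+\Delta t|y,t)P^s(y)}{P(y,t+\Delta t|x,t)P^s(x)} \;+\; \ln\frac{P(y,t+\Delta t|x,t)}{P(x,t+\Delta t|y,t)} \;=\; \ln\frac{P^s(y)}{P^s(x)},
\]
to obtain $\Delta S^C_{ad}(t)+\Delta S^C_e(t)=\Delta S^C_{ex}(t)$. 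The signs $\Delta S^C_{na}(t),\,\Delta S^C_{ad}(t)\geq 0$ are again provided by Proposition \ref{prop1}, so nothing further needs to be proved.

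There is no substantive obstacle here: the corollary is essentially a bookkeeping statement built into the definitions of the excess, adiabatic, and non-adiabatic entropy terms. The only point worth care is checking that the orientation of the ratios (which argument is $x$ and which is $y$) is consistent with the antisymmetry of $J^C(x,y,t,\Delta t)$ in $x\leftrightarrow y$, so that no sign is flipped in the passage from the pointwise log identity to the integral identity. Once that is confirmed, \eqref{face1}--\eqref{face3} follow at once.
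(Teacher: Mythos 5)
Your proposal is correct and coincides with what the paper intends: the corollary is stated there without proof as an immediate consequence of the definitions and Proposition \ref{prop1}, and your pointwise logarithm identities integrated against the common flux $\tfrac12 J^C(x,y,t,\Delta t)$ are exactly the omitted bookkeeping. The sign statements are indeed supplied by Proposition \ref{prop1}, so nothing further is needed.
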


The above corollary presents three different versions of the second law of thermodynamics for general Markov processes. Especially, the later two go beyond the classical one, and indicate that the second law of thermodynamics could be strengthened for certain systems.

Till now we have completed the construction of steady-state thermodynamic formalism for the C-K equation. As we will show later, above results, especially the strengthened versions of second law of thermodynamics, exactly correspond to those for the master equation \cite{esposito2010} and the Fokker-Planck equation \cite{van2010} in the thermodynamic limit, which constitutes the major conclusion of our current paper. Therefore, the steady-state thermodynamics is a self-consistent theory at least for various non-equilibrium systems governed by the Markov process and has more fruitful results than classical equilibrium thermodynamics.

\section{From C-K equation to master equation}

The Chapman-Kolmogorov equation is an manifestation of total probability theorem in discrete time space. If the time interval becomes smaller and smaller, the C-K equation will go to the master equation in the limit of continuous time.
To see this, we restrict ourselves to the stationary (or time-homogenous) Markov process, that is, the conditional probability density $P(x,t+\Delta t|y,t)$ is assumed to be independent of $t$. When $\Delta t \rightarrow 0$, using the Taylor series expansion, one has
\begin{equation}\label{transitionprobabilitytaylor}
P(x,t+\Delta t|y,t)=
\begin{cases}
\Delta t W(x|y)+O((\Delta t)^2)            & x\neq y\\
1-\Delta t \int_{x\neq y}dx W(x|y)+O((\Delta t)^2)  & x=y
\end{cases},
\end{equation}
where $W(x|y)\geq 0$ denotes the transition probability per unit time from state $y$ to state $x$.

Substituting formula \eqref{transitionprobabilitytaylor} to the C-K equation (Eq. \eqref{ckequation}), we arrive at
\begin{equation*}
  \frac{P(x, t+\Delta t)-P(x,t)}{\Delta t}=\int dy [W(x|y)P(y,t)-W(y|x)P(x,t)]+\frac{O((\Delta t)^2) }{\Delta t},\quad \forall t,\Delta t>0.
\end{equation*}
Taking the limit $\Delta t \rightarrow 0$ leads to the master equation, which describes how the probability of a system in state $x$ at time $t$ evolves with time, \textit{i.e.},
\begin{equation}\label{masterequation}
  \frac{\partial}{\partial t}P(x,t)=\int dy [W(x|y)P(y,t)-W(y|x)P(x,t)].
\end{equation}
The master equation is an ordinary differential equation when the state space is discrete and becomes an integral-differential equation in the case of continuous states.
It has been widely used in various stochastic processes, including random walks, birth-death processes, general chemical reaction systems, \cite{kurtz1972,gillespie1992} thermal unimolecular reactions at low pressures, \cite{troe1977} single-molecule enzyme kinetics in open biochemical systems \cite{ge2012} \textit{etc}.

By utilizing the formula \eqref{transitionprobabilitytaylor}, the steady state of the C-K equation in Eq. \eqref{cksteadystate} for all $x \neq y$ becomes
\begin{equation*}
\lim_{\Delta t \rightarrow 0} \frac{1}{\Delta t} \int dy \big[\Delta t W(x|y)P^s(y)- \Delta t W(y|x)P^s(x)+ O((\Delta t)^2) \big]=0,
\end{equation*}
thus,
\begin{equation}\label{mesteadystate}
\int dy [W(x|y)P^s(y)-W(y|x)P^s(x)]=0,\quad \forall x,
\end{equation}
since the above equation holds automatically when $x=y$. Clearly, Eq. \eqref{mesteadystate} agrees with the usual definition of steady-state solution $P^s(x)$ for the master equation. Similarly, the condition of detailed balance of the C-K equation becomes
\begin{equation}
W(x|y)P^e(y)=W(y|x)P^e(x),\quad \forall x,y,
\end{equation}
which is consistent with that for the master equation too.

Now we want to explore whether the steady-state thermodynamic formalism defined for the C-K equation (Eqs. \eqref{ckentropy}-\eqref{face3}) is also valid for the master equation. As the entropy function used for the master equation
\begin{equation}
S^{M}(t)=-\int dx P(x,t) \ln P(x,t)
\end{equation}
remains the same as that for the C-K equation, we can directly make a use of Eq. \eqref{transitionprobabilitytaylor} and obtain the following lemma.
\begin{lemma}\label{lemma1}
In the limit of continuous time, with respect to the transition rates defined in \eqref{transitionprobabilitytaylor}, the information gain for the stationary Markov process described by the C-K equation in \eqref{ckequation} vanishes, and the entropy change rate approaches to the entropy difference per unit time, $i.e.,$
\begin{equation*}
\lim_{\Delta t \rightarrow 0}\frac{\Delta I^{C}(t)}{\Delta t}=0,
 \quad
\lim_{\Delta t \rightarrow 0} \frac{\Delta S^C(t)}{\Delta t}=\lim_{\Delta t \rightarrow 0}\frac{S^{C}(t+\Delta t)-S^C(t)}{\Delta t}.
\end{equation*}
\end{lemma}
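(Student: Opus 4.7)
The plan is to reduce the two claims of the lemma to a single assertion, then to establish that assertion by a careful Taylor expansion in $\Delta t$. First, the identity $S^{C}(t+\Delta t)-S^{C}(t)=\Delta S^{C}(t)+\Delta I^{C}(t)$, which appears just before Proposition 2.1, shows that the second equality of the lemma is an immediate consequence of the first: dividing by $\Delta t$ and passing to the limit, the term $\Delta I^{C}(t)/\Delta t$ simply drops out. So I would concentrate entirely on showing that $\Delta I^{C}(t)/\Delta t\to 0$.

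For this, I would substitute the expansion \eqref{transitionprobabilitytaylor} into the C-K equation \eqref{ckequation}, exactly as in the derivation of the master equation above, to obtain
\[ P(x,t+\Delta t)=P(x,t)+\Delta t\,Q(x,t)+O((\Delta t)^2), \]
where $Q(x,t):=\int dy\,[W(x|y)P(y,t)-W(y|x)P(x,t)]$ is the right-hand side of the master equation. Then, at each $x$ with $P(x,t)>0$,
\[ \ln\frac{P(x,t+\Delta t)}{P(x,t)}=\Delta t\,\frac{Q(x,t)}{P(x,t)}+O((\Delta t)^2), \]
and multiplying by $P(x,t+\Delta t)=P(x,t)+\Delta t\,Q(x,t)+O((\Delta t)^2)$ gives the pointwise expansion
\[ -P(x,t+\Delta t)\ln\frac{P(x,t+\Delta t)}{P(x,t)}=-\Delta t\,Q(x,t)-\tfrac{1}{2}(\Delta t)^2\frac{Q(x,t)^2}{P(x,t)}+O((\Delta t)^3). \]

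The crucial cancellation comes when integrating this over $x$: probability conservation yields $\int dx\,Q(x,t)=\partial_t\int dx\,P(x,t)=0$, so the linear-in-$\Delta t$ piece vanishes identically. What remains is $\Delta I^{C}(t)=-\tfrac{1}{2}(\Delta t)^2\int dx\,Q(x,t)^2/P(x,t)+O((\Delta t)^3)$, and dividing by $\Delta t$ delivers the limit $\Delta I^{C}(t)/\Delta t\to 0$. The main obstacle I anticipate is analytic rather than structural: one must justify swapping the Taylor expansion with the $x$-integration and, in particular, control the integrability of $Q^{2}/P$ at points where $P(x,t)$ becomes small. This is innocuous in the discrete-state setting, and in the continuous case it requires mild regularity and decay assumptions on $P$ and on the rates $W(\cdot|\cdot)$, which are tacitly in force in the stationary-Markov framework adopted by the paper.
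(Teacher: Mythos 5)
Your proposal is correct and follows essentially the same route as the paper: substitute the short-time expansion of the transition probability into the C-K equation, expand the logarithm in $\Delta I^{C}(t)$ to first order, and observe that the linear term $-\int dx\,[P(x,t+\Delta t)-P(x,t)]$ vanishes by probability conservation, leaving only $O((\Delta t)^2)$ remainders. Your explicit identification of the surviving second-order term and the remark on integrability of $Q^2/P$ are slightly more careful than the paper's one-line computation, but the argument is the same.
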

\begin{proof}
Notice that
\begin{equation*}
{P(x, t+\Delta t)-P(x,t)}={\Delta t} \int dy [W(x|y)P(y,t)-W(y|x)P(x,t)]+{O((\Delta t)^2) },
\end{equation*}
then
\begin{align*}
\lim_{\Delta t \rightarrow 0} \frac{\Delta I^{C}(t)}{\Delta t}
&=\lim_{\Delta t \rightarrow 0} \frac{1}{\Delta t}\int dx P(x,t+\Delta t)\ln \bigg[1- \frac{{P(x, t+\Delta t)-P(x,t)}}{P(x, t+\Delta t)}\bigg]\\
&=-\lim_{\Delta t \rightarrow 0}\frac{1}{\Delta t}\int dx [{P(x, t+\Delta t)-P(x,t)}] =0.
\end{align*}
This shows that the entropy change rate equals to the entropy difference per unit time in the limit of $\Delta t$.
\end{proof}

Based on Proposition \ref{prop1} and Lemma \ref{lemma1}, we have following results.
\begin{thm}\label{thm1}
In the limit of continuous time, with respect to the transition rates defined in \eqref{transitionprobabilitytaylor}, the  entropy production rate, entropy flow rate, adiabatic entropy production rate, non-adiabatic entropy production rate and excess entropy change rate for the C-K equation \eqref{ckequation} become
\begin{align}
\lim_{\Delta t \rightarrow 0}\frac{\Delta S^{C}_i }{\Delta t}&=\frac{1}{2} \iint dxdy J^M(x,y,t)\ln \frac{W(x|y)P(y,t)}{W(y|x)P(x,t)}\equiv \frac{dS^{M}_i }{dt}\geq0,\\
\lim_{\Delta t \rightarrow 0}\frac{\Delta S^{C}_e }{\Delta t}&=\frac{1}{2} \iint dxdy J^M(x,y,t)\ln \frac{W(y|x)}{W(x|y)}\equiv  \frac{dS^{M}_e }{dt},\\
\lim_{\Delta t \rightarrow 0} \frac{\Delta S^{C}_{ad}}{\Delta t}&=\frac{1}{2} \iint dxdy J^M(x,y,t)\ln \frac{W(x|y)P^s(y)}{W(y|x)P^s(x)}\equiv  \frac{d S^{M}_{ad}}{dt}\geq0,\\
\lim_{\Delta t \rightarrow 0} \frac{\Delta S^{C}_{na}}{\Delta t}&=\frac{1}{2} \iint dxdy J^M(x,y,t)\ln \frac{P^s(x)P(y,t)}{P^s(y)P(x,t)}\equiv  \frac{d S^{M}_{na}}{dt}\geq0,\\
\lim_{\Delta t \rightarrow 0} \frac{\Delta S^{C}_{ex}}{\Delta t}&=\frac{1}{2} \iint dxdy J^M(x,y,t)\ln \frac{P^s(y)}{P^s(x)}\equiv  \frac{dS^{M}_{ex}}{dt},
\end{align}
which emerge as the entropy production rate, entropy flow rate, adiabatic entropy production rate, non-adiabatic entropy production rate and excess entropy change rate for the master equation \eqref{masterequation}, respectively. Here, the thermodynamic flux is defined as $J^M(x,y,t)=W(x|y)P(y,t) - W(y|x)P(x,t)$.
\end{thm}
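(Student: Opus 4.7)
The plan is to substitute the short-time expansion \eqref{transitionprobabilitytaylor} into each of the five integrals displayed in Proposition~\ref{prop1}, divide by $\Delta t$, and pass to the limit $\Delta t\to 0$. The key observation is that for $x\neq y$ the thermodynamic flux factorises as
\[
J^{C}(x,y,t,\Delta t)=\Delta t\,[W(x|y)P(y,t)-W(y|x)P(x,t)]+O((\Delta t)^{2})=\Delta t\,J^{M}(x,y,t)+O((\Delta t)^{2}),
\]
while at $x=y$ the flux vanishes identically. Hence the factor $\Delta t$ that multiplies $J^{M}$ cancels the prefactor $1/\Delta t$ coming from the limit, and the diagonal contributes nothing to any of the five integrals.

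I would then handle the logarithmic factors one at a time. In the entropy production the ratio $P(x,t+\Delta t|y,t)P(y,t)/[P(y,t+\Delta t|x,t)P(x,t)]$ becomes, for $x\neq y$, the ratio of $\Delta t W(x|y)P(y,t)$ to $\Delta t W(y|x)P(x,t)$ up to multiplicative corrections of order $\Delta t$; the two $\Delta t$ factors cancel inside the logarithm, which therefore converges to $\ln[W(x|y)P(y,t)/(W(y|x)P(x,t))]$. Exactly the same cancellation delivers $\ln[W(y|x)/W(x|y)]$ for the entropy flow and $\ln[W(x|y)P^{s}(y)/(W(y|x)P^{s}(x))]$ for the adiabatic part. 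For the non-adiabatic and excess quantities the logarithmic factor does not depend on $\Delta t$ at all, so it passes through the limit unchanged. Combining these pointwise limits with the $\Delta t\,J^{M}+O((\Delta t)^{2})$ factor coming from $J^{C}$, dividing by $\Delta t$, and exchanging the limit with the double integral yields the five master-equation expressions stated in the theorem. The sign properties transfer automatically: non-negativity of $\Delta S^{C}_{i}/\Delta t$, $\Delta S^{C}_{ad}/\Delta t$ and $\Delta S^{C}_{na}/\Delta t$ holds for every $\Delta t>0$ by Proposition~\ref{prop1} and is preserved on passing to the limit.

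The main technical obstacle is justifying the exchange of $\lim_{\Delta t\to 0}$ with the double integral. On a discrete or bounded state space with bounded rates $W$ and $P(\cdot,t)>0$ this is immediate by dominated convergence: the integrands are bounded and the $O((\Delta t)^{2})/\Delta t$ remainder vanishes uniformly. On an unbounded continuous state space one must assume enough integrability of $W$ and enough tail decay of $P$ and $P^{s}$ to dominate the logarithmic ratios and the remainder, and I would simply state these as standing regularity hypotheses on the stationary Markov process. A secondary point requiring care is the diagonal $x=y$, where the logarithms involving $W(x|x)$ are undefined; this is harmless because $J^{C}$ and its limit $J^{M}$ both vanish there, so the integrals may be interpreted as restricted to $x\neq y$.
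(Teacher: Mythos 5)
Your proposal is correct and follows essentially the same route the paper intends: the paper simply asserts Theorem~\ref{thm1} as a consequence of Proposition~\ref{prop1} and the expansion \eqref{transitionprobabilitytaylor}, and your substitution of $J^{C}=\Delta t\,J^{M}+O((\Delta t)^{2})$ together with the cancellation of $\Delta t$ inside the logarithmic ratios is exactly the intended computation. Your explicit treatment of the diagonal $x=y$ and of the interchange of limit and integral supplies details the paper leaves implicit, but introduces no new method.
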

Note that in the limit of continuous time, the steady-state thermodynamic formalism defined for the C-K equation fully agrees with the formulation for the discrete master equation.\cite{esposito2010,van2010} Thus the correspondence between the C-K equation and the master equation on the steady-state thermodynamics is completely verified.

\section{From master equation to F-P equation}
The master equation involves transitions among all possible states, which are hard to be modeled or computed.
Practically, a coarser description of the system in replace of the master equation is needed, which is now known as the Fokker-Planck equation (F-P equation for short). The F-P equation has been applied to fields as diverse as quantum optics,\cite{risken1989} micro-macro coupling models of polymeric fluids,\cite{keunings2004} biochemical oscillations, \cite{cao2015}  electric circuits and laser arrays, population dynamics and stock marketing \cite{frank2005} \textit{etc.}.

It is well known that the F-P equation can be deduced from the master equation by expanding the transition rates and neglecting high order terms of jump moments.
In order to make the derivation strictly in mathematics, we adopt the canonical form expansion introduced by Van Kampen, \cite{van1983} in which a parameter $\Omega$ representing the system size is introduced. Then the master equation \eqref{masterequation} can be rewritten as
\begin{equation}\label{meomega}
  \frac{\partial}{\partial t}P_{\Omega}(x,t)=\int_{V_x} dy [W_{\Omega}(x|y)P_{\Omega}(y,t)-W_{\Omega}(y|x)P_{\Omega}(x,t)].
\end{equation}
Without loss of generality, we set $V_x=\{y:~ \parallel y-x \parallel \leq \Omega \}$ as a region with centre $x$ and radius $\Omega$, the dependence of probability $P_{\Omega}$ and transition rate $W_{\Omega}$ on ${\Omega}$ is written out explicitly. By defining a jump length $r=x-y$, one can rewrite $W_{\Omega}(x|y) \equiv W_{\Omega}(y,r)=W_{\Omega}(x-r,r)$.

With respect to the system size $\Omega$, we introduce the re-scaled state variable, time and jump distance respectively as
\begin{equation}\label{scaledvariable}
X={\Omega}^{-1}x,\quad T={\Omega}^{-1}t,\quad R={\Omega}^{-1}r.
\end{equation}
Consequently, we have $dx=\Omega dX, {\partial}_t={\Omega}^{-1}{\partial}_T, dy=-\Omega dR$.
Since $W_{\Omega}(x-r,r)$ and  $P_{\Omega}(x,t)$ can be expanded into power series of ${\Omega}^{-1}$, \textit{i.e.}
\begin{eqnarray}
&&W_{\Omega}(x-r,r)= W_{\Omega}(\Omega(X-R),\Omega R)=\Phi_0(X-R,R) + {\Omega}^{-1}\Phi_1(X-R,R) + {\Omega}^{-2}\Phi_2(X-R,R) + \cdots, \nonumber\\
&&P_{\Omega}(x,t)= P_{\Omega}(\Omega X,{\Omega}T)=p_0(X,T) + {\Omega}^{-1}p_1(X,T) + {\Omega}^{-2}p_2(X,T)+\cdots, \label{pomega}
\end{eqnarray}
the master equation \eqref{meomega} is reformulated as
\begin{align*}
&\frac{\partial}{\partial T}\big[p_0(X,T) + {\Omega}^{-1}p_1(X,T) + {\Omega}^{-2}p_2(X,T)+\cdots \big] \\
&={\Omega}^{2} {\int_{V_0}} dR  \big[\Phi_0(X-R,R) + {\Omega}^{-1}\Phi_1(X-R,R) + {\Omega}^{-2}\Phi_2(X-R,R) + \cdots\big] \\
& \quad\quad \quad\times \big[ p_0(X-R,T) + {\Omega}^{-1}p_1(X-R,T) + {\Omega}^{-2}p_2(X-R,T)+\cdots \big ] \\
&-{\Omega}^{2} {\int_{V_0}} dR  \big[\Phi_0(X,-R) + {\Omega}^{-1}\Phi_1(X,-R) + {\Omega}^{-2}\Phi_2(X,-R) + \cdots\big] \\
&\quad \quad\quad\times \big[ p_0(X,T) + {\Omega}^{-1}p_1(X,T) + {\Omega}^{-2}p_2(X,T)+\cdots \big].
\end{align*}
Here $V_0=\{R:~ \parallel R \parallel \leq 1 \}$ denotes a region with centre $0$ and radius $1$. Taking Taylor series expansion of $\Phi_i(X-R,R)$ and $p_i(X-R,T)$ 
with respect to $X$, we have
\begin{align}
&\frac{\partial}{\partial T}\big[p_0(X,T) + {\Omega}^{-1}p_1(X,T) + {\Omega}^{-2}p_2(X,T)+\cdots \big]  \nonumber \\
&=\sum_{i,j=0}^\infty {\Omega}^{2-i-j} {\int_{V_0}} dR  \big\{
-R \frac{\partial}{\partial X} [\Phi_i(X,R)p_j(X,T)] + \frac{R^2}{2} \frac{{\partial}^2}{\partial X^2} [\Phi_i(X,R)p_j(X,T)]+\cdots \big\}, \label{equationtaylor}
\end{align}
where terms $\int_{V_0} dR [\Phi_i(X,R)-\Phi_i(X,-R)]$ $(i\geq0)$ are exactly cancelled due to the symmetry of integral region ${V_0}$.

To make a coarse graining of the system, following jump moments are introduced
\begin{equation}
\alpha_{j,i}(X)=\int_{V_0}dR \frac{{\Omega}^j R^j}{j!} \Phi_{i}(X,R).
\end{equation}
Especially, here we are interested in master equations of the diffusion-type, \cite{van1983} which require $\alpha_{1,0}(X)=0$. By substituting the jump moments into Eq. \eqref{equationtaylor} and taking the limit $\Omega \rightarrow \infty$, the zeroth order equation in $\Omega$ yields the casual Fokker-Planck equation
\begin{equation}\label{fpequation}
\frac{\partial}{\partial T} p_0(X,T)
=-\frac{\partial}{\partial X} J^F(X,T), \quad J^F(X,T)=\alpha_{1,1}(X)p_0(X,T)-\frac{\partial}{\partial X}[\alpha_{2,0}(X)p_0(X,T)],
\end{equation}
where $J^F(X,T)$ is the probability flux. $\alpha_{1,1}(X)$ and $\alpha_{2,0}(X)$ ($\alpha_{2,0}(X)>0$) denote the drift and diffusion coefficients respectively.
Note that both coefficients $\alpha_{1,1}(X)$ and $\alpha_{2,0}(X)$ are functions of position $X$.

Similarly, the steady state of the master equation \eqref{meomega} becomes
\begin{align*}
 \lim_{\Omega\rightarrow\infty}\sum_{i,j=0}^\infty {\Omega}^{2-i-j} \displaystyle{\int_{V_0}} dR  \big\{
-R \frac{\partial}{\partial X} [\Phi_i(X,R)p_j^s(X)] + \frac{R^2}{2} \frac{{\partial}^2}{\partial X^2} [\Phi_i(X,R)p_j^s(X)]+\cdots \big\}=0,
\end{align*}
which gives
\begin{equation}
\frac{\partial}{\partial X} \big\{ \alpha_{1,1}(X)p^s_0(X)-\frac{\partial}{\partial X}[\alpha_{2,0}(X)p^s_0(X)] \big\}=0,\quad \forall X.
\end{equation}
This is exactly the definition of the steady state for the F-P equation. Thus we have justified the consistency of the steady state between the master equation and the F-P equation.
A subclass of steady state is the detailed balance $p_0^e(X)$, which satisfies
\begin{equation}
\alpha_{1,1}(X)p^e_0(X)-\frac{\partial}{\partial X}[\alpha_{2,0}(X)p^e_0(X)]=0, \quad \forall X.
\end{equation}

Starting from the entropy function for the master equation $S^M(t) = -\int dx P_{\Omega}(x,t)\ln P_{\Omega}(x,t)$, it is straightforward to verify that the volume density of entropy $S^M$
\begin{equation}
\lim_{\Omega \rightarrow \infty} {\Omega}^{-1}  S^M(t)=-\int dX p_0(X,T)\ln p_0(X,T)\equiv S^F(T)
\end{equation}
emerges as the entropy function for the F-P equation in \eqref{fpequation}. This is not a coincidence. Actually, we can further show the steady-state thermodynamic formalism defined on the master equation converges automatically to that on the F-P equation in the limit of large system size $\Omega \rightarrow \infty$. This interesting correspondence confirms that there indeed exists a universal thermodynamic framework valid for general non-equilibrium phenomena described by Markov processes with respect to the steady state. Although a similar conclusion has been reached in Esposito and Van den Broeck's original paper, \cite{van2010} their results are limited to a Markov chain process with only one-step jump and can not be generalized to cases with multiple-step jump in principle. In contrast, our proof and conclusions do not suffer from such a limitation and are more rigorous in mathematics.

\begin{thm}\label{thm2}
 In the limit of large system size $\Omega \rightarrow \infty$, the volume density of entropy production rate, entropy flow rate, adiabatic entropy production rate, non-adiabatic entropy production rate and excess entropy change rate for the master equation \eqref{meomega} become
\begin{align}
\lim_{\Omega \rightarrow \infty} \bigg({\Omega}^{-1} \frac{dS^M_i}{dT}\bigg)&=\int dX     J^F\cdot \frac{J^F} { \alpha_{2,0}(X)p_0} \equiv \frac{dS^F_i}{dT}\geq 0,\label{eprintheorem1}\\
\lim_{\Omega \rightarrow \infty} \bigg({\Omega}^{-1} \frac{dS^M_e}{dT}\bigg)&=\int dX     J^F\cdot \frac{{\partial_X}\alpha_{2,0}(X)-\alpha_{1,1}(X)}{ \alpha_{2,0}(X)}\equiv \frac{dS^F_e}{dT},\\
\lim_{\Omega \rightarrow \infty} \bigg({\Omega}^{-1} \frac{dS^M_{ad}}{dT}\bigg)&= \int dX J^F\cdot \frac{\alpha_{1,1}(X)p^s_0-{\partial_X}[\alpha_{2,0}(X)p^s_0] }{\alpha_{2,0}(X) p_0^s}\equiv \frac{dS^F_{ad}}{dT}\geq 0,\\
\lim_{\Omega \rightarrow \infty} \bigg({\Omega}^{-1} \frac{dS^M_{na}}{dT}\bigg)&=\int dX  J^F\cdot [{\partial_X} (\ln p_0^s)- {\partial_X} (\ln p_0)] \equiv \frac{dS^F_{na}}{dT} \geq 0,\\
\lim_{\Omega \rightarrow \infty} \bigg({\Omega}^{-1} \frac{dS^M_{ex}}{dT}\bigg)&=\int dX  J^F\cdot [-\partial_X (\ln p_0^s)] \equiv \frac{dS^F_{ex}}{dT},
\end{align}
which emerge as the entropy production rate, entropy flow rate, adiabatic entropy production rate, non-adiabatic entropy production rate and excess entropy change rate for the Fokker-Planck equation \eqref{fpequation}, respectively.
\end{thm}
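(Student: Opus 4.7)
The plan is to substitute the Van Kampen $\Omega$-expansions for $W_\Omega$ and $P_\Omega$ (including \eqref{pomega}) into the master-equation rates of Theorem \ref{thm1}, rescale the integration variables via \eqref{scaledvariable} so that $dx\,dy = \Omega^2\,dX\,dR$, $W(x|y)=\Phi_\Omega(X-R,R)$, and $W(y|x)=\Phi_\Omega(X,-R)$, and then organize each integrand as a Kramers--Moyal-type power series in $\Omega^{-1}$ of the same kind used in the derivation of \eqref{fpequation}. The hope is that once the antisymmetry cancellations $\int_{V_0}[\Phi_i(X,R)-\Phi_i(X,-R)]\,dR=0$ have eliminated the zeroth-moment contributions, the surviving terms will assemble themselves from the jump moments $\alpha_{1,1}(X)$ and $\alpha_{2,0}(X)$ into the F--P expressions on the right-hand sides.

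First I would handle $dS^M_e/dT$ and $dS^M_{ex}/dT$, whose logarithms only involve the $W$'s or the steady-state density $p_0^s$. Only the flux $J^M=\Phi_\Omega(X-R,R)p_\Omega(X-R,T)-\Phi_\Omega(X,-R)p_\Omega(X,T)$ needs to be Taylor-expanded in $X$ and in $\Omega^{-1}$. Integration by parts in $X$ should then move $R$-moments onto derivatives and allow the leading surviving terms to collapse into $J^F=\alpha_{1,1}p_0-\partial_X(\alpha_{2,0}p_0)$, multiplied by the corresponding thermodynamic force $[\partial_X\alpha_{2,0}-\alpha_{1,1}]/\alpha_{2,0}$ or $-\partial_X\ln p_0^s$. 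This step is essentially the direct analogue of the derivation of \eqref{fpequation}, applied to integrands that already factorize cleanly as flux times force.

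For $dS^M_i/dT$, $dS^M_{ad}/dT$ and $dS^M_{na}/dT$ the logarithm also contains $p_\Omega(X-R,\cdot)$ or $p_0^s(X-R)$, so both factors of $J^M\,\ln[\,\cdot\,]$ become simultaneously small after rescaling. Writing the log-argument as $1+\epsilon$ with $\epsilon=O(R,\Omega^{-1})$ and using $\epsilon\ln(1+\epsilon)\sim\epsilon^2$ at leading order, the integrand should reduce to $(J^M)^2$ divided by the symmetric weight $\Phi_0(X,R)p_0(X,T)$; integrating the inner measure $\tfrac12 R^2\Phi_0\,dR$ will then produce $\alpha_{2,0}(X)$ and yield the expected $(J^F)^2/(\alpha_{2,0}p_0)$ form of $dS^F_i/dT$. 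For the adiabatic/non-adiabatic split I would simply decompose $\ln\tfrac{W(x|y)P(y)}{W(y|x)P(x)} = \ln\tfrac{W(x|y)P^s(y)}{W(y|x)P^s(x)} + \ln\tfrac{P^s(x)P(y)}{P^s(y)P(x)}$ and apply the same machinery to each piece, with non-negativity of the limiting rates inherited from Proposition \ref{prop1}.

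The main obstacle I anticipate is the careful bookkeeping of $\Omega$-powers. The $\Omega^2$ prefactor from the Jacobian must be cancelled exactly by the vanishing of the integrand at $R=0$ to the appropriate order, together with the antisymmetry cancellations inside $V_0$. Distinguishing $\Omega^{-2}R^2$-type contributions that survive the limit from higher-order pieces that do not, and verifying consistently across all five quantities that the specific combinations of $\Phi_i p_j$ produce $\alpha_{1,1}$ versus $\alpha_{2,0}$ after the inner $R$-integration, is where the bulk of the computational work lies.
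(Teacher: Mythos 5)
Your overall strategy --- Van Kampen rescaling, $\Omega^{-1}$ power counting, antisymmetry cancellations, and reassembling jump moments into $J^F$ --- is the same as the paper's, and your expansion of the flux $J^M$ and the decomposition of the logarithm into adiabatic and non-adiabatic pieces are fine. The genuine gap is in how you evaluate the inner $R$-integral. After the step $\epsilon\ln(1+\epsilon)\sim\epsilon^2$ the integrand for $dS^M_i/dT$ is $I_{11}(X,R)^2/I_{12}(X,R)$ with the weight $I_{12}\approx\Phi_0\,p_0$ sitting in the \emph{denominator}; you cannot ``integrate the inner measure $\tfrac12 R^2\Phi_0\,dR$ to produce $\alpha_{2,0}$'' there, because $\int dR\,f(R)^2/g(R)$ is not $\bigl(\int f\,dR\bigr)^2\big/\int g\,dR$. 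The paper closes this with two ingredients your sketch lacks: (i) the Cauchy--Schwarz inequality, which gives $\int_{V_0}dR\,I_{11}^2/I_{12}\geq\bigl[\int_{V_0}I_{11}R\,dR\bigr]^2\big/\int_{V_0}I_{12}R^2\,dR$, whose right-hand side is $(2\Omega^{-2}J^F)^2/(2\Omega^{-2}\alpha_{2,0}p_0)$ up to higher order --- so at finite $\Omega$ one only obtains the F--P entropy production as a \emph{lower bound}; and (ii) a concentration argument (the rescaled rates $\Phi_i(X,R)$ become delta-like in $R$ as $\Omega\to\infty$) to promote the inequality to an equality in the limit. The same difficulty recurs for $dS^M_e$ and $dS^M_{ad}$: their expanded forces are ratios $I_{21}/I_{22}$ and $I_{31}/I_{32}$ with $R$-dependent denominators, so ``only the flux needs to be Taylor-expanded'' is not correct for $dS^M_e$, and the factorization of $\int dR\,I_{11}I_{21}/I_{22}$ into a ratio of moments again requires the concentration step. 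Without (i)--(ii) or an equivalent device the computation stalls at an $R$-dependent denominator that cannot be integrated termwise.

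A secondary omission: the symmetry of $V_0$ kills $\int_{V_0}[\Phi_i(X,R)-\Phi_i(X,-R)]\,dR$, but the first moment $\int_{V_0}R\,\Phi_0(X,R)\,dR=\Omega^{-1}\alpha_{1,0}(X)$ does not vanish by antisymmetry --- it vanishes only under the diffusion-type hypothesis $\alpha_{1,0}=0$, which you never invoke and without which $J^F$ does not emerge at the advertised order in $\Omega^{-1}$.
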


\begin{proof}
Here we take the instantaneous entropy production rate as an example. Other relations could be deduced in a similar way, please see Appendix for details.
We rewrite the instantaneous entropy production rate for the master equation \eqref{meomega} as
\begin{equation*}
\frac{dS^M_i}{dt}=\frac{1}{2} \iint dxdy J^{M}(x,y,t)A_1^{M}(x,y,t),
\end{equation*}
where $J^{M}(x,y,t)=W_{\Omega}(x|y)P_{\Omega}(y,t) - W_{\Omega}(y|x)P_{\Omega}(x,t)$ and $A_1^{M}(x,y,t)=\ln [W_{\Omega}(x|y)P_{\Omega}(y,t)]-\ln [W_{\Omega}(y|x)P_{\Omega}(x,t)]$ represent the thermodynamic flux and force for the master equation \eqref{meomega} respectively. Expand $J^{M}$ into Taylor series with respect to $X$ as
\begin{align*}
{J^{M}}(x,y,t)
&=\big[\Phi_0(X-R,R) + {\Omega}^{-1}\Phi_1(X-R,R) \big] \cdot \big[ p_0(X-R,T) + {\Omega}^{-1}p_1(X-R,T) \big] \\
& -\big[ \Phi_0(X,-R) + {\Omega}^{-1}\Phi_1(X,-R) \big] \cdot \big[ p_0(X,T) + {\Omega}^{-1}p_1(X,T) \big] + O({\Omega}^{-2}) \\
&=[\Phi_0(X,R)-\Phi_0(X,-R)]p_0  -{R}{\partial_X} [\Phi_0(X,R)p_0] + {\Omega}^{-1} [\Phi_1(X,R)-\Phi_1(X,-R)]p_0  \\
& +{\Omega}^{-1}[\Phi_0(X,R)-\Phi_0(X,-R)]p_1 + O({\Omega}^{-2})\equiv I_{11}(X,R)+ O({\Omega}^{-2}),
\end{align*}
where $p_i$ is short for $p_i(X,T)$. Similarly,
\begin{equation*}
W_{\Omega}(y|x)P_{\Omega}(x,t)
=\Phi_0(X,-R)p_0 + {\Omega}^{-1}\Phi_1(X,-R)p_0+ {\Omega}^{-1}\Phi_0(X,-R)p_1 + O({\Omega}^{-2})
\equiv I_{12}(X,R)+ O({\Omega}^{-2}).
\end{equation*}
With respect to above formulas, the thermodynamic force $A_1^{M}$ can be expanded as
\begin{equation*}
A_{1}^{M}(x,y,t)
=\ln \bigg[ 1+ \frac{J^{M}(x,y,t)}{W_{\Omega}(y|x)P_{\Omega}(x,t)} \bigg]
=\frac{I_{11}(X,R)}{I_{12}(X,R)}+O({\Omega}^{-2}).
\end{equation*}
Note that $I_{11} \sim O({\Omega}^{-1})$ and $I_{12} \sim O(1)$.
Accordingly, the volume density of entropy production rate for the master equation becomes
\begin{align*}
{\Omega}^{-1} \frac{dS^M_i}{dT}
&=\frac{1}{2}{\Omega}^{2} \iint dXdR \big[I_{11}(X,R)+ O({\Omega}^{-2})\big] \cdot \bigg[\frac{{I_{11}(X,R)}}{I_{12}X,R)}+O({\Omega}^{-2})\bigg]\\
&=\frac{{\Omega}^{2}}{2} \iint dXdR\bigg[\frac{{I_{11}(X,R)}^2}{I_{12}(X,R)}+O({\Omega}^{-3})\bigg].
\end{align*}
Since ${I_{12}(X,R)} \geq 0$, according to the Cauchy-Schwarz inequality,
\begin{equation}\label{eprinequality}
\frac{{\Omega}^{2}}{2} \iint dXdR\frac{{I_{11}(X,R)}^2}{I_{12}(X,R)}
= \frac{1}{2} \iint dXdR \frac{[\Omega I_{11}(X,R)R]^2}{I_{12}(X,R)R^2}
\geq \frac{1}{2} \int dX \frac{\big \{\Omega \int_{V_0} dR [ I_{11}(X,R)R] \big\}^2}{ \int_{V_0} dR[I_{12}(X,R)R^2]},
\end{equation}
with
\begin{align*}
\Omega \int_{V_0} dR~ [ I_{11}(X,R)R]
&=\Omega \int_{V_0} dR~  [\Phi_0(X,R)-\Phi_0(X,-R)]R p_0
-\Omega\int_{V_0} dR~ {\partial_X} [\Phi_0(X,R)p_0]{R^2}\\
&+\int_{V_0} dR~ [\Phi_1(X,R)-\Phi_1(X,-R)]Rp_0 +
\int_{V_0} dR~ [\Phi_0(X,R)-\Phi_0(X,-R)]R p_1\\
&= 2\alpha_{1,0}(X)p_0 - 2{\Omega}^{-1} \partial_X [\alpha_{2,0}(X)p_0]+2{\Omega}^{-1}\alpha_{1,1}(X)p_0+2{\Omega}^{-1}\alpha_{1,0}(X)p_1\\
&= 2{\Omega}^{-1}\alpha_{1,1}(X)p_0-2{\Omega}^{-1} \partial_X [\alpha_{2,0}(X)p_0],
\end{align*}
and
\begin{align*}
\int_{V_0} dR~[I_{12}(X,R)R^2]
&=\int_{V_0} dR~\big[\Phi_0(X,-R)R^2p_0 + {\Omega}^{-1}\Phi_1(X,-R)R^2p_0+ {\Omega}^{-1}\Phi_0(X,-R)R^2p_1\big] \\
&=2{\Omega}^{-2}\alpha_{2,0}(X)p_0+O({\Omega}^{-3}).
\end{align*}
As a result, the lower bound of the volume density of instantaneous entropy production rate for the master equation is given by
\begin{align*}\label{27.1}
{\Omega}^{-1} \frac{dS^M_i}{dT}\geq\frac{1}{2} \int dX  \big[\frac{J^F(X,T)^2} { \alpha_{2,0}(X)p_0(X,T)}+O(\Omega^{-1})\big].
\end{align*}
The dominant term on the right-hand side is exactly the instantaneous entropy production rate for the F-P equation. Actually, above inequality provides an interesting relation between the entropy production rate for the master equation and that for the F-P equation. The latter exists as a lower bound for the former, which agrees with the information loss during the coarse graining procedure.

In the limit of large system size $\Omega\rightarrow\infty$, the transition rates $\Phi_i(X,R)$ approach to delta functions with respect to the jump length $R$ (since $R=r/\Omega\rightarrow0$). Then the equality in \eqref{eprinequality} holds, meaning
\begin{align*}
\lim_{\Omega \rightarrow \infty} \bigg({\Omega}^{-1}   \frac{dS^M_i}{dT}\bigg)&=\int dX  \frac{J^F(X,T)^2} { \alpha_{2,0}(X)p_0(X,T)} \equiv \frac{dS^F_i}{dT}\geq 0.
\end{align*}
This completes our proof.
\end{proof}

\begin{rem}
From Theorem \ref{thm1} and \ref{thm2}, it is notable that the instantaneous entropy-production rate, entropy flow rate, adiabatic entropy production rate, non-adiabatic entropy production rate and excess entropy change rate for Markov processes could all be expressed as a bilinear form of thermodynamic fluxes and forces, a reflection of the famous Onsager-Casimir relation. \cite{onsager1931_1}
\end{rem}

\section{From F-P equation to master equation}
In the last section, we deduce the steady-state thermodynamic structure for the F-P equation from that of the master equation. Astonishingly, an inverse procedure is also valid in mathematics. To be concrete, we can derive the steady-state thermodynamic formalism for a special type of master equations with a tridiagonal transition rate matrix from that of the corresponding F-P equation. This is the best result one can expected, since only the first two jump moments are kept in the F-P equation.

We start with the F-P equation in \eqref{fpequation} and discretize its right-hand side with respect to space variable $X=n\epsilon$.
\begin{align*}
\frac{dp(n,T)}{d T}
&=-\frac{1}{\epsilon} [\alpha_{1,1}(n+1) p(n+1,T)-\alpha_{1,1}(n) p(n,T)] \\
&+ \frac{1}{\epsilon^2}[\alpha_{2,0}(n+1)p(n+1,T)-2\alpha_{2,0}(n)p(n,T)+\alpha_{2,0}(n-1)p(n-1,T)],
\end{align*}
for $p(n,T) \equiv p_0(n\epsilon,T)$, $\alpha_{1,1}(n) \equiv \alpha_{1,1}(n\epsilon)$ and $\alpha_{2,0}(n) \equiv \alpha_{2,0}(n\epsilon)$. Clearly, the small parameter $\epsilon$ corresponds to $1/\Omega$ in the canonical form expansion. The above equation can be rewritten into a master equation
\begin{equation}\label{discreteme}
\frac{dp(n,T)}{d T}
=[W(n|n-1) p(n-1,T)-W(n-1|n) p(n,T)] + [W(n|n+1) p(n+1,T)-W(n+1|n) p(n,T)],
\end{equation}
with the forward and backward transition rates
\begin{equation*}
 W(n|n-1)=\frac{\alpha_{2,0}(n-1)}{\epsilon^2},\quad W(n-1|n)=\frac{\alpha_{2,0}(n)}{\epsilon^2}-\frac{\alpha_{1,1}(n)}{\epsilon}.
\end{equation*}
Then based on the steady-state thermodynamic formalism for the F-P equation, following results can be verified.

\begin{thm}\label{thm3}
In the limit of $\epsilon \rightarrow 0$, the entropy production rate, entropy flow rate, adiabatic entropy production rate, non-adiabatic entropy production rate and excess entropy change rate for the Fokker-Planck equation \eqref{fpequation} with discrete state variables become
\begin{align}
\lim_{\epsilon \rightarrow 0} \frac{d S^{F}_i(\epsilon) }{d T}
&= \frac{1}{2}\sum_{m,n}  J^M(m,n,T)\ln \frac{W(n|m) p(m,T)}{W(m|n) p(n,T)} = \frac{dS^M_i}{dT} \geq 0,\nonumber\\
\lim_{\epsilon \rightarrow 0} \frac{d S^{F}_e(\epsilon) }{d T}
&= \frac{1}{2}\sum_{m,n}  J^M(m,n,T) \ln \frac{W(m|n)}{W(n|m)} = \frac{dS^M_e}{dT} ,\nonumber\\
\lim_{\epsilon \rightarrow 0} \frac{d S^{F}_{ad}(\epsilon)}{d T}
&=\frac{1}{2}\sum_{m,n}  J^M(m,n,T)\ln \frac{W(n|m)P^s(m)}{W(m|n)P^s(n)}=  \frac{d S^{M}_{ad}}{dT}\geq 0,\nonumber\\
\lim_{\epsilon \rightarrow 0} \frac{d S^{F}_{na}(\epsilon)}{d T}
&=\frac{1}{2}\sum_{m,n}  J^M(m,n,T)\ln \frac{P^s(n)P(m,t)}{P^s(m)P(n,t)}=  \frac{d S^{M}_{na}}{dT}\geq 0,\nonumber\\
\lim_{\epsilon \rightarrow 0} \frac{d S^{F}_{ex}(\epsilon)}{d T}
&=\frac{1}{2}\sum_{m,n}  J^M(m,n,T)\ln \frac{P^s(m)}{P^s(n)}=  \frac{dS^{M}_{ex}}{dT},\nonumber
\end{align}
which lead to corresponding results for the master equation \eqref{discreteme}, respectively. Here $J^M(m,n,T)=W(n|m)p(m,T)-W(m|n)p(n,T)$ and $m=n \pm 1$.
\end{thm}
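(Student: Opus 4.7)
The plan is to work from the master equation formulas on the right-hand side of each identity, insert the explicit tridiagonal transition rates $W(n+1|n)=\alpha_{2,0}(n)/\epsilon^{2}$ and $W(n-1|n)=\alpha_{2,0}(n)/\epsilon^{2}-\alpha_{1,1}(n)/\epsilon$, Taylor-expand every object evaluated at $(n\pm 1)\epsilon$ about $X=n\epsilon$, and check that the leading $\epsilon$-contribution reproduces the F-P integrand from Theorem \ref{thm2} term by term. This is essentially the reverse of the Van Kampen expansion used in the proof of Theorem \ref{thm2}.

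The core of the argument rests on two matched expansions. First, after substituting the transition rates the discrete flux collapses to the F-P flux up to an $\epsilon$-rescaling,
\begin{equation*}
J^{M}(n,n\pm1,T)=\pm\frac{J^{F}(X,T)}{\epsilon}+O(1),
\end{equation*}
because the combination $\alpha_{1,1}(X)p_{0}(X,T)-\partial_{X}[\alpha_{2,0}(X)p_{0}(X,T)]$ emerges precisely from the leading parts of $\tfrac{1}{\epsilon^{2}}[\alpha_{2,0}(n\pm1)p(n\pm1)-\alpha_{2,0}(n)p(n)]\pm\tfrac{1}{\epsilon}\alpha_{1,1}(n\pm1)p(n\pm1)$. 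Second, each logarithmic thermodynamic force is $O(\epsilon)$ with leading coefficient equal to the corresponding F-P force; for the total entropy production one finds
\begin{equation*}
\ln\frac{W(n\pm1|n)p(n,T)}{W(n|n\pm1)p(n\pm1,T)}=\pm\frac{\epsilon\,J^{F}(X,T)}{\alpha_{2,0}(X)p_{0}(X,T)}+O(\epsilon^{2}),
\end{equation*}
which follows by expanding $\ln[(\alpha_{2,0}p)(n\pm1)/(\alpha_{2,0}p)(n)]$ and combining it with the $\alpha_{1,1}/\alpha_{2,0}$ correction forced by the $1/\epsilon$ piece of $W(n\mp1|n)$. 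Multiplying flux by force bond-by-bond therefore produces the integrand $(J^{F})^{2}/(\alpha_{2,0}p_{0})$ of Theorem \ref{thm2}. The four remaining identities are obtained by the same device with the appropriate replacement inside the logarithm: $p\to p^{s}$ for the adiabatic and excess pieces, the ratio $p^{s}(m)p(n)/[p^{s}(n)p(m)]$ for the non-adiabatic piece, and $W(n|n\pm1)/W(n\pm1|n)$ for the entropy flow. In each case the logarithm contributes one factor of $\epsilon$ that cancels the $1/\epsilon$ in $J^{M}$, and the bond integrand matches the corresponding F-P integrand of Theorem \ref{thm2}. The antisymmetry of both $J^{M}$ and the logarithm under $(m,n)\mapsto(n,m)$ reduces $\tfrac12\sum_{m,n}$ to a single sum over bonds, and the three non-negativity statements transfer automatically because Proposition \ref{prop1}, applied directly to \eqref{discreteme}, already yields the required inequalities.

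The main difficulty is controlling the sub-leading terms. Since $J^{M}\sim 1/\epsilon$ diverges, any $O(\epsilon)$ residue missed in the logarithmic force would contaminate the limit with a spurious finite piece. One must therefore check that, after the cancellation between the $\partial_{X}\ln(\alpha_{2,0}p_{0})$ contribution and the $\alpha_{1,1}/\alpha_{2,0}$ correction induced by the drift part of $W(n\mp1|n)$, the log expansion genuinely takes the form $\epsilon(\,\cdot\,)+O(\epsilon^{2})$; the non-adiabatic case needs two such ratios expanded simultaneously but is structurally identical. Sufficient smoothness of $\alpha_{1,1}$, $\alpha_{2,0}$, $p_{0}$ and $p_{0}^{s}$ and enough decay at infinity to identify $\sum_{n}$ with a Riemann sum are assumed implicitly throughout.
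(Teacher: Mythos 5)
Your proposal is correct in substance and rests on exactly the two expansions that drive the paper's own proof: the identification of the discrete flux with $\pm J^F/\epsilon$ up to $O(1)$, and of each logarithmic force with $\epsilon$ times the corresponding Fokker--Planck force up to $O(\epsilon^2)$, including the crucial cancellation between the $\partial_X\ln[\alpha_{2,0}p_0]$ term and the $\alpha_{1,1}/\alpha_{2,0}$ correction generated by the drift part of the backward rate $W(n-1|n)$. The only genuine difference is the direction of travel. The paper discretizes the Fokker--Planck integral, factors each summand as a product $D_1(n,T)D_2(n,T)$, and recognizes $D_1=\epsilon J^M+O(\epsilon)$ and $D_2=\epsilon^{-1}\ln(\cdots)+O(\epsilon)$, so that the sum rearranges \emph{directly} into the master-equation expression; you instead start from the master-equation sums and expand them toward the continuum integrand of Theorem \ref{thm2}. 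Your route is arguably more transparent (it is Theorem \ref{thm2} re-derived for the one-step-jump case), but it never touches the actual left-hand side $dS^{F}_{i}(\epsilon)/dT$ of the stated identities: to close the argument you still need the (easy but necessary) observation that the chosen discretization of the Fokker--Planck integral is consistent with the continuum integral, and you must fix the normalization carefully --- with the paper's convention the summand of $dS^{F}_{i}(\epsilon)/dT$ carries $\epsilon^{2}$ in the denominator, so both sides of each identity scale like $\epsilon^{-1}\int dX(\cdots)$ rather than like a Riemann sum $\epsilon\sum_n$, and the claimed ``limit'' is really a leading-order asymptotic equality of two $\epsilon$-dependent quantities (a looseness the paper itself shares). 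Your flag that an unaccounted $O(\epsilon)$ residue in the logarithm would survive multiplication by the $O(1/\epsilon)$ flux is precisely the right consistency check, and your verification that the logarithm is $\epsilon\,J^F/(\alpha_{2,0}p_0)+O(\epsilon^2)$ is correct.
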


\begin{proof}
We take the entropy production rate $ {dS^F_i}(\epsilon)/{dT}$ as an example. According to Eq. \eqref{eprintheorem1}, the entropy production rate for the F-P equation is given by
\begin{align*}
\frac{dS^F_i}{dT}
= \int {dX} \frac{\big \{ \partial_X[\alpha_{2,0}(X) p_0(X,T)]- \alpha_{1,1}(X)p_0(X,T)\big \}^2}{\alpha_{2,0}(X)p_0(X,T)}.
\end{align*}
Discretize above formula with respect to state variable $X=n \epsilon$ and denote it as
\begin{align*}
\frac{dS^F_i(\epsilon)}{dT}
=\sum_{n}
\frac{\big \{[\alpha_{2,0}(n+1)p(n+1,T)-\alpha_{2,0}(n)p(n,T)]-\epsilon \alpha_{1,1}(n)p(n,T)\big \}^2}{\epsilon^2 \alpha_{2,0}(n)p(n,T)}
\equiv \sum_{n}  D_1(n,T)D_2(n,T),
\end{align*}
where
\begin{align*}
D_1(n,T)
&\equiv\frac{1}{\epsilon}  {\big \{[\alpha_{2,0}(n+1)p(n+1,T)-\alpha_{2,0}(n)p(n,T)]-\epsilon \alpha_{1,1}(n)p(n,T)\big \}}  \\
&=\frac{1}{\epsilon}  {\big \{ [\alpha_{2,0}(n+1)p(n+1,T)-\alpha_{2,0}(n)p(n,T)]-\epsilon \alpha_{1,1}(n+1)p(n+1,T)+ O({\epsilon^2})\big \}}  \\
&={\epsilon} [W(n|n+1)p(n+1,T)-W(n+1|n)p(n,T)]+O(\epsilon),\\
D_2(n,T)
&\equiv\frac{[\alpha_{2,0}(n+1)p(n+1,T)-\alpha_{2,0}(n)p(n,T)]-\epsilon \alpha_{1,1}(n)p(n,T)}{\epsilon \alpha_{2,0}(n)p(n,T)}\\
&=\frac{1}{\epsilon} \frac{[\alpha_{2,0}(n+1)p(n+1,T)-\alpha_{2,0}(n)p(n,T)]-\epsilon \alpha_{1,1}(n+1)p(n+1,T) + O({\epsilon^2})}{\alpha_{2,0}(n)p(n,T)}\\
&=\frac{1}{\epsilon}   \ln \frac{\alpha_{2,0}(n+1) p(n+1,T)-\epsilon \alpha_{1,1}(n+1)p(n+1,T)}{\alpha_{2,0}(n)p(n,T)}+ O({\epsilon}) \\
&=\frac{1}{\epsilon}  \ln \frac{W(n|n+1) p(n+1,T)}{W(n+1|n) p(n,T)}+ O({\epsilon}).
\end{align*}
Note that the leading terms of $D_1(n,T)$ and $D_2(n,T)$ are both $O(1)$.
As a result,
\begin{align*}
\lim_{\epsilon\rightarrow 0} \frac{dS^F_i(\epsilon)}{dT}
&= \lim_{\epsilon\rightarrow 0}
   \sum_{n}  \big[{\epsilon} W(n|n+1)p(n+1,T)-{\epsilon}W(n+1|n)p(n,T)+ O({\epsilon}) \big]\\
&            \times \big[\frac{1}{\epsilon} \ln \frac{W(n|n+1) p(n+1,T)}{W(n+1|n) p(n,T)}+ O({\epsilon}) \big] \\
&= \sum_{n}  \big[ W(n|n+1)p(n+1,T)-W(n+1|n)p(n,T)\big]
                 \ln \frac{W(n|n+1) p(n+1,T)}{W(n+1|n) p(n,T)}  \\
&= \frac{1}{2}\sum_{n}  \big[ W(n|n+1)p(n+1,T)-W(n+1|n)p(n,T)\big]
                 \ln \frac{W(n|n+1) p(n+1,T)}{W(n+1|n) p(n,T)}  \\
&+\frac{1}{2}\sum_{n}  \big[ W(n|n-1)p(n-1,T)-W(n-1|n)p(n,T)\big]
                 \ln \frac{W(n|n-1) p(n-1,T)}{W(n-1|n) p(n,T)}.
\end{align*}
This completes the proof. Other relations could be verified in the same way and will not be shown here.
\end{proof}

\begin{rem}
Based on Theorem \ref{thm2} and \ref{thm3}, we can conclude that, in the limit of large system size, when a Markov process is restricted to one-step jump, the steady-state thermodynamic formalisms for the master equation and for the F-P equation have a one-to-one correspondence.
\end{rem}

\section{Internal energy and Helmholtz free energy}
The entropy and its time evolution are discussed in above sections. It remains to explore the internal energy and Helmholtz free energy.
Firstly, based on the definition of steady state \eqref{cksteadystate}, the internal energy $U^C(t)$ and free energy $F^C(t)$ for the C-K equation \eqref{ckequation} are introduced as
\begin{align}
U^C(t)=&-T \int dx P(x,t)\ln P^s(x),\label{ckinternalenergy}\\
F^C(t)=&U^C(t)-TS^C(t)= T \int dx P(x,t)\ln \frac{P(x,t)}{P^s(x)}\label{ckfreeenergy},
\end{align}
here $T$ denotes the temperature and is assumed to be constant. Accordingly, we have
\begin{align*}
&\Delta U^C
=- \frac{T}{2}  \iint dxdy J^C(x,y,t,\Delta t)\ln \frac{P^s(x)}{P^s(y)} =   T \Delta S_{ex}^C,  \\
& \Delta F^C
=- \frac{T}{2}  \iint dxdy J^C(x,y,t,\Delta t)\ln \frac{P^s(x)P(y,t)}{P^s(y)P(x,t)}= -T \Delta S_{na}^C \leq 0.
\end{align*}	
Therefore, the changes of internal energy and free energy are directly proportional to the excess entropy change and non-adiabatic entropy production, except for a constant factor $T$. Above relations also holds for the master equation and F-P equation, as long as the corresponding internal energy and free energy function are defined in the same form as in Eqs. \eqref{ckinternalenergy} and \eqref{ckfreeenergy}. Consequently, following relations could be established.

\begin{prop} Following relations hold
\begin{eqnarray}
&&\lim_{\Delta t \rightarrow 0} \frac{\Delta U^C}{\Delta t}= \frac{dU^M}{d t}, \qquad\qquad
\lim_{\Delta t \rightarrow 0} \frac{\Delta F^C}{\Delta t}= \frac{dF^M}{d t}\leq 0,  \\
&&\lim_{\Omega \rightarrow \infty} \bigg({\Omega}^{-1} \frac{dU^M}{dT} \bigg)=\frac{dU^F}{dT}, \quad
\lim_{\Omega \rightarrow \infty} \bigg({\Omega}^{-1} \frac{dF^M}{dT} \bigg)=\frac{dF^F}{dT} \leq 0,
\end{eqnarray}
where $\frac{dU^M}{dt}$, $\frac{dF^M}{dt}$, $\frac{dU^F}{dT}$ and $\frac{dF^F}{dT}$ represent the internal energy change rate and free energy dissipation rate for the master equation and F-P equation respectively.
\end{prop}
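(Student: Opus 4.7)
The plan is to reduce all four statements to two algebraic identities, one for the master equation and one for the Fokker--Planck equation, that mirror the C-K identities $\Delta U^C = T\,\Delta S^C_{ex}$ and $\Delta F^C = -T\,\Delta S^C_{na}$ derived immediately above the proposition statement. Once those analogs are in place, the four limit statements follow by applying Theorem \ref{thm1} and Theorem \ref{thm2} termwise to the right-hand sides, and the two inequalities follow from the non-negativity of the non-adiabatic entropy production already proved in Proposition \ref{prop1} (and carried over to the master and F-P limits by Theorems \ref{thm1} and \ref{thm2}).

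Concretely, I would first define $U^M(t) = -T\int dx\, P_\Omega(x,t)\ln P^s(x)$, $F^M(t) = T\int dx\, P_\Omega(x,t)\ln[P_\Omega(x,t)/P^s(x)]$, and the F-P analogs $U^F(T), F^F(T)$ built from $p_0$ and $p_0^s$ in the same form as \eqref{ckinternalenergy}-\eqref{ckfreeenergy}. Next I would differentiate $U^M$ in time, insert the master equation \eqref{masterequation} for $\partial_t P_\Omega$, swap the dummy variables $x,y$ in one of the two resulting double integrals, and symmetrize to obtain
\begin{equation*}
\frac{dU^M}{dt} = \frac{T}{2}\iint dx\,dy\, J^M(x,y,t)\ln\frac{P^s(y)}{P^s(x)} = T\,\frac{dS^M_{ex}}{dt}.
\end{equation*}
Applying the same manipulation to $F^M$, using $S^M = -\int P_\Omega \ln P_\Omega$, yields $dF^M/dt = -T\, dS^M_{na}/dt$. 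For the F-P case, starting from \eqref{fpequation} and performing one integration by parts on $\int dX\, (\partial_T p_0)\ln p_0^s$ to expose the flux $J^F$ gives the parallel pair $dU^F/dT = T\, dS^F_{ex}/dT$ and $dF^F/dT = -T\, dS^F_{na}/dT$.

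With these four identities in hand, the assembly is immediate. Combining $\Delta U^C = T\,\Delta S^C_{ex}$ and $\Delta F^C = -T\,\Delta S^C_{na}$ with the limits in Theorem \ref{thm1} produces $\lim_{\Delta t\to 0}\Delta U^C/\Delta t = dU^M/dt$ and $\lim_{\Delta t\to 0}\Delta F^C/\Delta t = dF^M/dt$; combining $dU^M/dt = T\, dS^M_{ex}/dt$ and $dF^M/dt = -T\, dS^M_{na}/dt$ with the limits in Theorem \ref{thm2} produces $\lim_{\Omega\to\infty}\Omega^{-1}\,dU^M/dT = dU^F/dT$ and $\lim_{\Omega\to\infty}\Omega^{-1}\,dF^M/dT = dF^F/dT$. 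The two free-energy inequalities are inherited from $\Delta S^C_{na}, dS^M_{na}/dt, dS^F_{na}/dT \geq 0$, all already established.

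The main obstacle is the F-P step in the derivation of $dU^F/dT$ and $dF^F/dT$: one must carry out the integration by parts
\begin{equation*}
\frac{d}{dT}\int dX\, p_0(X,T)\ln p_0^s(X) = -\int dX\, J^F(X,T)\,\partial_X\ln p_0^s(X),
\end{equation*}
and show the boundary contributions vanish under the decay assumptions implicit in the framework of \eqref{fpequation}. Once this identity and its free-energy counterpart are secured, the rest of the argument is just term-by-term invocation of Theorems \ref{thm1}--\ref{thm2}, without any new estimates.
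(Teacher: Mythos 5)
Your proposal is correct and follows essentially the same route as the paper, which (implicitly) proves the proposition by reducing $\Delta U$ and $\Delta F$ at each level of description to $T\,\Delta S_{ex}$ and $-T\,\Delta S_{na}$ respectively and then invoking Theorems \ref{thm1} and \ref{thm2} together with the non-negativity of the non-adiabatic entropy production. Your extra attention to the integration by parts and the vanishing boundary terms in the Fokker--Planck step is a detail the paper glosses over, but it does not change the argument.
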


\section{conclusions and discussions}
In this paper, we have explored the steady-state thermodynamics for Markov processes described by the Chapman-Kolmogorov equation, master equation and Fokker-Planck equation separately.
By taking the limit of continuous time, the steady-state thermodynamic formalism specified in Eqs. \eqref{ckentropy}-\eqref{face3} for the C-K equation fully agrees with that for the master equation.
A similar one-to-one correspondence could be established rigorously between the master equation and F-P equation in the limit of large system size by the canonical form expansion. Furthermore, when a Markov process is restrained to one-step jump, the steady-state thermodynamic formalism for the F-P equation with discrete state variables also goes to the formulation for a special type of master equations with a tridiagonal transition rate matrix, as the discretization step gets smaller and smaller. These interesting connections show that the steady-state thermodynamics thus constructed on a Markov process is quite universal, no matter whether it is written in the form of C-K equation, master equation or F-P equation.

There are two important generalizations of the current study. We note the F-P equation is a special truncation of the master equation by just keeping the first two jump moments. Systematical explorations on high orders of truncation have been done and known as the Kramer equation in the literature.\cite{van1983} It remains to show a similar steady-state thermodynamic formalism could be defined and keeps in accordance with that of the master equation. The other non-trivial generalization is related to the quantum version of master equation, such as the Nakajima-Zwanzig equation in the exact form, or Redfield equation and Lindbald equation in approximate forms. \cite{Thingna2013Reduced} In contrast to classical master equations, which are restricted to only diagonal elements, quantum master equations deal with the entire density matrix, including off-diagonal elements, and thus have far more fruitful contents. Related works are going on.

\section*{acknowledgment}
This work was supported by the Tsinghua University Initiative Scientific Research Program (Grants 20151080424). L. H. would like to thank Prof. Wen-An Yong for his stimulating discussions.

\section*{appendix}

The limit of entropy production rate for the master equation is calculated in the main text, here we are going to prove remaining relations in Theorem \ref{thm2}, including the limit of entropy flow rate, adiabatic entropy production rate, non-adiabatic entropy production rate and excess entropy change rate.

\begin{proof}
Note that the entropy flow rate, adiabatic entropy production rate, non-adiabatic entropy production rate and excess entropy change rate for the master equation have a similar form,
that is, the integrand is a product of the thermodynamic flux and a thermodynamic force.
Then we have
\begin{eqnarray*}
\frac{dS^M_e}{dt}&=&\frac{1}{2} \iint dxdy J^{M}(x,y,t)A_2^{M}(x,y,t), \quad \frac{dS^M_{ad}}{dt}=\frac{1}{2} \iint dxdy J^{M}(x,y,t)A_3^{M}(x,y,t) , \\
\frac{dS^M_{na}}{dt}&=&\frac{1}{2} \iint dxdy J^{M}(x,y,t)A_4^{M}(x,y,t), \quad \frac{dS^M_{ex}}{dt}=\frac{1}{2} \iint dxdy J^{M}(x,y,t)A_5^{M}(x,y,t) ,
\end{eqnarray*}
where $J^{M}(x,y,t)=W_{\Omega}(x|y)P_{\Omega}(y,t) - W_{\Omega}(y|x)P_{\Omega}(x,t)$ denotes the thermodynamic flux and $A_2^{M}(x,y,t)=\ln \frac{W_{\Omega}(y|x)}{W_{\Omega}(x|y)}$, $A_3^{M}(x,y,t)=\ln \frac{W_{\Omega}(x|y)P_{\Omega}^s(y)}{W_{\Omega}(y|x)P_{\Omega}^s(x)}$, $A_4^{M}(x,y,t)=\ln \frac{P_{\Omega}^s(x)P_{\Omega}(y,t)}{P_{\Omega}^s(y)P_{\Omega}(x,t)}$, $A_5^{M}(x,y,t)=\ln \frac{P_{\Omega}^s(y)}{P_{\Omega}^s(x)}$ denote thermodynamic forces.

Expand $A_i^{M}(x,y,t)~(i=2,3,4,5)$ into Taylor series with respect to $X$ as
\begin{align*}
&{A_2^{M}}(x,y,t)\\
&=  \ln \big[ \Phi_0(X,-R) + {\Omega}^{-1}\Phi_1(X,-R) + O({\Omega}^{-2}) \big]
-\ln \big[\Phi_0(X-R,R) + {\Omega}^{-1}\Phi_1(X-R,R) + O({\Omega}^{-2}) \big] \\
&= \ln \big[ \Phi_0(X,-R) + {\Omega}^{-1}\Phi_1(X,-R) + O({\Omega}^{-2}) \big]
-\ln \big[\Phi_0(X,R) + {\Omega}^{-1}\Phi_1(X,R)-{R}{\partial_X} \Phi_0(X,R)+ O({\Omega}^{-2}) \big] \\
&=\frac{[\Phi_0(X,-R)-\Phi_0(X,R)] +{R}{\partial_X} \Phi_0(X,R) + {\Omega}^{-1} [\Phi_1(X,-R)-\Phi_1(X,R)]}{\Phi_0(X,-R) + {\Omega}^{-1}\Phi_1(X,-R) }+ O({\Omega}^{-2}) \\
&\equiv \frac{I_{21}(X,R)}{I_{22}(X,R)}+O({\Omega}^{-2}),\\
&{A_3^{M}}(x,y,t)\\
&= \frac{[\Phi_0(X,R)-\Phi_0(X,-R)](p_0^s+{\Omega}^{-1}p_1^s) -{R}{\partial_X} [\Phi_0(X,R)p_0^s]
+{\Omega}^{-1} [\Phi_1(X,R)-\Phi_1(X,-R)]p_0^s}{\Phi_0(X,-R)p_0^s + {\Omega}^{-1}\Phi_1(X,-R)p_0^s+ {\Omega}^{-1}\Phi_0(X,-R)p_1^s} +O({\Omega}^{-2}) \\
&\equiv \frac{I_{31}(X,R)}{I_{32}(X,R)}+O({\Omega}^{-2}),\\
&{A_4^{M}}(x,y,t)\\
&=-\ln \frac{p_0^s(X-R)+ {\Omega}^{-1}p_1^s(X-R)+O({\Omega}^{-2})}{p_0^s(X)+ {\Omega}^{-1}p_1^s(X)+O({\Omega}^{-2})}
  +\ln \frac{p_0(X-R)+ {\Omega}^{-1}p_1(X-R)+O({\Omega}^{-2})}{p_0(X)+ {\Omega}^{-1}p_1(X)+O({\Omega}^{-2})} \\
&= R \partial_X (\ln p_0^s) -  R \partial_X (\ln p_0) +O({\Omega}^{-2})
\equiv {I_{41}(X,R)}+O({\Omega}^{-2}),\\
&{A_5^{M}}(x,y,t)\\
&= \ln \frac{p_0^s(X-R)+ {\Omega}^{-1}p_1^s(X-R)+O({\Omega}^{-2})}{p_0^s(X)+ {\Omega}^{-1}p_1^s(X)+O({\Omega}^{-2})}
= -R \partial_X (\ln p_0^s) + O({\Omega}^{-2})
\equiv {I_{51}(X,R)}+O({\Omega}^{-2}),
\end{align*}
while the thermodynamic flux $J^{M}(x,y,t)$ becomes
\begin{align*}
{J^{M}}(x,y,t)
&=[\Phi_0(X,R)-\Phi_0(X,-R)]p_0  -{R}{\partial_X} [\Phi_0(X,R)p_0]
+{\Omega}^{-1} [\Phi_1(X,R)-\Phi_1(X,-R)]p_0   \\
& +{\Omega}^{-1}[\Phi_0(X,R)-\Phi_0(X,-R)]p_1 + O({\Omega}^{-2})\equiv I_{11}(X,R)+ O({\Omega}^{-2}).
\end{align*}

In the limit of $\Omega\rightarrow\infty$, the transition rates $\Phi_i(X,R)$ approach to delta functions with respect to the jump length $R$ (since $R=r/\Omega\rightarrow0$). Then
\begin{align*}
\lim_{\Omega \rightarrow \infty} \bigg({\Omega}^{-1}   \frac{dS^M_e}{dT}\bigg)
&=\frac{1}{2} \iint dXdR\frac{{\Omega}^2 R^2 I_{11}(X,R)I_{21}(X,R)}{I_{22}(X,R)R^2}\\
&=\frac{1}{2} \int dX~\frac{ \big[{\Omega}  \int_{V_0} dRI_{11}(X,R)R \big] \big[{\Omega}  \int_{V_0} dRI_{21}(X,R)R \big]}{ \int_{V_0} dR I_{22}(X,R)R^2 },\\
\lim_{\Omega \rightarrow \infty} \bigg({\Omega}^{-1}   \frac{dS^M_{ad}}{dT}\bigg)
&=\frac{1}{2} \iint dXdR\frac{{\Omega}^2 R^2 I_{11}(X,R)I_{31}(X,R)}{I_{32}(X,R)R^2}\\
&=\frac{1}{2} \int dX~\frac{ \big[{\Omega}  \int_{V_0} dRI_{11}(X,R)R \big] \big[{\Omega}  \int_{V_0} dRI_{31}(X,R)R \big]}{ \int_{V_0} dR I_{32}(X,R)R^2 }.
\end{align*}
Since
\begin{align*}
&\Omega \int_{V_0} dR [I_{11}(X,R)R]
= 2{\Omega}^{-1} \alpha_{1,1}(X)p_0 -2{\Omega}^{-1} \partial_X [\alpha_{2,0}(X)p_0] = 2{\Omega}^{-1} J^F(X,T),\\
&\Omega \int_{V_0} dR [I_{21}(X,R)R]\\
&=\Omega \int_{V_0} dR  \big\{ [\Phi_0(X,-R)-\Phi_0(X,R)]R +{R^2}{\partial_X} \Phi_0(X,R) + {\Omega}^{-1} [\Phi_1(X,-R)-\Phi_1(X,R)]R \big\}\\
&= 2{\Omega}^{-1} \partial_X [\alpha_{2,0}(X)]-2{\Omega}^{-1}\alpha_{1,1}(X),\\
&\int_{V_0} dR[I_{22}(X,R)R^2]
=\int_{V_0}  dR\big[\Phi_0(X,-R)R^2 + {\Omega}^{-1}\Phi_1(X,-R)R^2 \big]
= 2{\Omega}^{-2}\alpha_{2,0}(X)+O({\Omega}^{-3}), \\
&\Omega \int_{V_0} dR [ I_{31}(X,R)R]
= 2{\Omega}^{-1}\alpha_{1,1}(X)p_0^s-2{\Omega}^{-1} \partial_X [\alpha_{2,0}(X)p_0^s],\\
&\int_{V_0} dR[I_{32}(X,R)R^2]
= 2{\Omega}^{-2}\alpha_{2,0}(X)p_0^s +O({\Omega}^{-3}),
\end{align*}
the volume densities of the entropy flow rate and adiabatic entropy production rate for the master equation become
\begin{align*}
\lim_{\Omega \rightarrow \infty} \bigg({\Omega}^{-1} \frac{dS^M_e }{dT}\bigg)
&=\int dX J^F\cdot \frac{{\partial_X} \alpha_{2,0}(X) - \alpha_{1,1}(X)}{\alpha_{2,0}(X)}\equiv \frac{dS^F_e}{dT},\\
\lim_{\Omega \rightarrow \infty} \bigg({\Omega}^{-1} \frac{dS^M_{ad}}{dT}\bigg)
&=\int dX J^F\cdot \frac{\alpha_{1,1}(X)p_0^s- \partial_X [\alpha_{2,0}(X)p_0^s]}{\alpha_{2,0}(X) p_0^s}\equiv \frac{dS^F_{ad}}{dT},
\end{align*}
which emerge as the entropy flow rate and adiabatic entropy production rate for the corresponding F-P equation respectively.
Further note that ${I_{41}(X,R)}/{R}= {\partial_X} (\ln p_0^s)- {\partial_X} (\ln p_0)$ and ${I_{51}(X,R)}/{R}=-\partial_X (\ln p_0^s)$ are independent of $R$, we have
\begin{align*}
\lim_{\Omega \rightarrow \infty} \bigg({\Omega}^{-1}   \frac{dS^M_{na}}{dT}\bigg)
&=\frac{1}{2} \int dX~  \big[{\Omega} ~\frac{I_{41}(X,R)}{R} \big] \cdot \big[{\Omega}  \int_{V_0} dRI_{11}(X,R)R \big]\\
&=\int dX J^F\cdot [{\partial_X} (\ln p_0^s)- {\partial_X} (\ln p_0)] \equiv \frac{dS^F_{na}}{dT} ,\\
\lim_{\Omega \rightarrow \infty} \bigg({\Omega}^{-1}   \frac{dS^M_{ex}}{dT}\bigg)
&=\frac{1}{2} \int dX~  \big[{\Omega} ~\frac{I_{51}(X,R)}{R} \big] \cdot \big[{\Omega}  \int_{V_0} dRI_{11}(X,R)R \big]\\
&=\int dX J^F\cdot [-\partial_X (\ln p_0^s)] \equiv \frac{dS^F_{ex}}{dT}.
\end{align*}
This completes the proof.
\end{proof}

\bibliography{fp}
\bibliographystyle{aipnum4-1}
\end{document}